%% file: main.tex
\documentclass[11pt]{amsart}
\usepackage[
  margin=1.1in
]{geometry}
\usepackage{amsmath,amssymb,amsthm,mathtools}
\usepackage{enumitem}
\usepackage{xcolor}
\usepackage[colorlinks=true,linkcolor=blue!60!black,citecolor=blue!60!black,urlcolor=blue!60!black]{hyperref}
\makeatletter
\renewcommand{\paragraph}{\@startsection{paragraph}{4}%
  \z@{.5\linespacing\@plus.7\linespacing}{-\fontdimen2\font}%
  {\normalfont\bfseries}}
\makeatother

\newtheorem{theorem}{Theorem}[section]
\newtheorem{lemma}[theorem]{Lemma}

\newtheorem{corollary}[theorem]{Corollary}

\theoremstyle{definition}

\theoremstyle{remark}

\newcommand{\E}{\mathbb E}

\newcommand{\poly}{\operatorname{poly}}

\newcommand{\ind}{\mathbf{1}}

\DeclareMathOperator{\MSB}{MSB}
\DeclareMathOperator{\Par}{Par}

\usepackage[
backend=biber,
style=alphabetic,
sorting=nyt,
maxbibnames=99,
giveninits=true,
doi=true,
url=true,
eprint=true
]{biblatex}

\title{Truly work-efficient parallel deterministic $(\Delta+1)$-coloring and maximal independent set}
\author[C. Hutton and A. Melrod]{Chase Hutton}
\thanks{$^{*}$University of Maryland, College Park, MD}
\thanks{$^{\dagger}$Harvard University, Cambridge, MA}
\makeatletter
\renewcommand{\@setauthors}{%
  \begingroup
  \medskip
  \centering
  \begin{tabular}{c@{\hspace{4em}}c}
    Chase Hutton$^{*}$ & Adam Melrod$^{\dagger}$ \\[2pt]
    \texttt{chutton6@umd.edu} & \texttt{amelrod@math.harvard.edu}
  \end{tabular}\par
  \endgroup
}
\makeatother

\begin{document}
\begin{abstract}
      We give deterministic parallel algorithms that compute a $(\Delta+1)$-coloring and a maximal independent set for a simple graph with $n$ vertices and $m$ edges in $O(n+m)$ work and $O(\poly\log n)$ depth.
\end{abstract}
\maketitle
\input{short-note/intro}
\input{short-note/prelims}
\input{short-note/benefit-problems}
\input{short-note/graph-sparsification}
\input{short-note/deterministic-graph-algos}

\printbibliography[
heading=bibintoc,
title={References}
]

\end{document}

%% file: short-note/intro.tex
\section{Introduction}\label{sec:intro}

In the maximal independent set (MIS) problem, we are given a graph $G$ with $n$ vertices and $m$ edges, and the task is to output a set$M \subseteq V(G)$ such that no two vertices of $M$ are neighbors but every vertex outside of $M$ has a neighbor in $M$. In the $(\Delta+1)$-coloring problem, the task is to color every vertex by a number in $\{0,\dots,\Delta\}$, where $\Delta$ is the maximum degree of $G$, so that no edge is monochromatic. Sequentially, both problems are solved by the same greedy procedure in $O(n+m)$ time: the vertices are processed in an arbitrary order, and the solution is extended at each vertex in accordance with its previously processed neighbors. The question of matching this trivial linear bound in parallel has driven research in parallel computing since the early 1980s \cite{KarpWigderson1985MIS,Luby1986MIS,AlonBabaiItai1986MIS}.

To make the question precise, we adopt the standard work--depth model \cite{JaJa1992}. The \emph{work} of a parallel algorithm is
the total number of operations it performs, and its \emph{depth} is the
length of its longest chain of sequentially dependent operations. A parallel algorithm should therefore have small depth and, more importantly, it should not perform substantially more work than its sequential counterpart: an
algorithm with superlinear work is slower than the trivial sequential
algorithm unless the number of available processors exceeds the work
overhead. Accordingly, we call an algorithm for either problem \emph{work-efficient} if its work is $O(n+m)$, matching the sequential greedy bound, and its depth is polylogarithmic in $n$.

For randomized algorithms the question has largely been resolved. The
classical algorithms of Luby \cite{Luby1986MIS} and of Alon, Babai, and Itai
\cite{AlonBabaiItai1986MIS} cost linear work and run in $O(\log n)$ rounds.
Blelloch, Fineman, and Shun \cite{BlellochFinemanShun2012} later showed that
the sequential greedy algorithm for MIS, applied to a uniformly random
vertex order, admits a direct parallelization. Notably, the parallelization
performs $O(n+m)$ work regardless of the order, and only its depth
guarantee, logarithmic with high probability \cite{tight-mis}, depends on
the randomness. For deterministic algorithms, which are the subject of this
paper, the question has proved far more resistant.

The natural route to a deterministic parallel algorithm is to derandomize a
randomized one, and parallel derandomization is a classical subject in its
own right. Karp and Wigderson \cite{KarpWigderson1985MIS} initiated it with
a deterministic MIS algorithm of polylogarithmic depth and
$\widetilde{O}(n^3)$ work. Luby \cite{Luby1986MIS} showed that his
randomized algorithm can be derandomized by exhaustively searching a
pairwise-independent sample space of quadratic size, giving $O(\log^2 n)$
time on $O(mn^2)$ processors, and later removed this penalty through the
method of conditional probabilities \cite{Luby1993RemovingRandomness}.
General-purpose derandomization techniques followed
\cite{MotwaniNaorNaor1994,BergerRompel1991}, as did sharper algorithms for
MIS: Goldberg and Spencer
\cite{GoldbergSpencer1989Constructing,GoldbergSpencer1989NewParallel}
brought the work to $O\big((n+m)\log^2 n\big)$ in polylogarithmic depth,
and Han \cite{Han1996Derandomization} computes an
MIS in $O(\log^{2.5} n)$ time on $O\big((m+n)/\log^{1.5} n\big)$
processors, for a total of $O\big((n+m)\log n\big)$ work. Most recently, Ghaffari, Grunau, and
Rozho\v{n} developed a program of parallel derandomization
\cite{GhaffariGrunauRozhon2023DerandomizationI,GhaffariGrunau2024DerandomizationII} of Chernoff-like bounds, that for example is applicable to edge coloring.

\paragraph{State of the Art.} Following their contributions to derandomizing Chernoff-like bounds, Ghaffari and Grunau \cite{GhaffariGrunau2025TrueWorkEfficiency} developed derandomized algorithms with $(n+m)\poly(\log\log n)$ work and polylogarithmic depth for MIS and maximal matching. Their algorithms derandomize a hitting set abstraction that captures maximal matching and, in a more involved form, MIS, and that has the potential to apply to a broader class of problems. The method rounds fractional assignments to integral ones over $\Theta(\log n)$ stages whose accumulated errors are governed by an extremely intricate potential analysis. Moreover, the application of their method to the MIS problem requires a significant amound of additional work. Ghaffari and Grunau pose explicitly the question of whether the remaining $\poly(\log\log n)$ overhead can be removed and true work efficiency reached.

For $(\Delta+1)$-coloring, the best known deterministic bound is the
$O\big((n+m)\log^2 n)$ work algorithm given by Han
\cite{Han1996Derandomization}, building off of Luby's work providing a $O\big((n+m)\log^3 n \log \log n)$ work solution \cite{Luby1993RemovingRandomness}. No better algorithm has appeared in the three decades since.

Two further lines of work deserve mention. When the maximum
degree of the graph is constant, work-efficient deterministic algorithms for both problems have long been known through the deterministic coin tossing
technique of Cole and Vishkin
\cite{ColeVishkin1986CoinTossing,GoldbergPlotkinShannon1988}. In the distributed and massively parallel settings, deterministic round-efficient algorithms for coloring and MIS have advanced considerably
\cite{RozhonGhaffari2020,GhaffariKuhn2021Coloring,GhaffariGrunau2024NetworkDecomposition,CoyCzumajDaviesPeckMishra2024}, but simulating a round-efficient message-passing algorithm is inherently inefficient.

On general graphs, among the three classical
symmetry-breaking problems, only maximal matching was known to admit a
work-efficient deterministic parallel algorithm
\cite{Kelsen1994MaximalMatching,han1995improvement}. Whether MIS or
$(\Delta+1)$-coloring admits one has remained open.

\paragraph{Our Contributions and Paper Organization.}
This paper answers the question in the affirmative for both problems.

\begin{theorem}[cf.\ Corollary~\ref{cor:mis-linear}]\label{thm:intro-mis}
A maximal independent set of any graph can be computed deterministically in
$O(n+m)$ work and $\poly(\log n)$ depth.
\end{theorem}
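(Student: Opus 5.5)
The plan is to reduce MIS to a work-efficient deterministic procedure that removes a constant fraction of the edges in $O(n+m)$ work and $\poly\log n$ depth. Then we iterate with geometrically shrinking work. Concretely, suppose a single phase, on the current residual graph $G_i$ with $m_i$ edges, does three things: it finds an independent set $I_i$, adds it to the output, and deletes $I_i\cup N(I_i)$. If that phase uses $O(n_i+m_i)$ work and leaves $m_{i+1}\le (1-c)m_i$, the total edge work is $\sum_i O(m_i)=O(m)$. After $O(\log n)$ phases no edges remain, and the isolated vertices are added wholesale in $O(n)$ work. The vertex term $\sum_i n_i$ is handled by discarding isolated vertices at each phase, since non-isolated vertices number at most $2m_i$. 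The depth is $O(\log n)\cdot\poly\log n$.

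To make a phase, I would use Luby's "good edge" argument derandomized through conditional expectations over a pairwise-independent family. Call a vertex $v$ good if at least a third of its neighbors have degree at most $\deg(v)$; good vertices are incident to at least half of the edges. Luby's analysis shows that under pairwise independence, with marking probability $1/(2\deg)$, each good vertex is removed with constant probability. Hence the expected number of removed edges is at least $c\,m_i$.

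Searching the seed naively costs a $\log n$ or worse factor. To avoid this, I would first sparsify, which is what the paper's graph-sparsification section presumably provides. Deterministically select a subgraph $H\subseteq G_i$ with $O(m_i/\poly\log n)$ edges, or with degrees bucketed into $O(\log n)$ classes by $\MSB$ of the degree. The subgraph should preserve the property that an independent set chosen greedily or by Luby on $H$ still covers a constant fraction of the edges of $G_i$. We can then afford the $O(\log^{O(1)} n)$-overhead derandomization, such as Han's or Luby's method of conditional probabilities on pairwise-independent bits, on $H$ alone. The edges not in $H$ are only touched in $O(1)$ linear scans to compute $N(I_i)$ and the new degrees.

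Framing the derandomization as a "benefit problem" fits this plan. Each candidate seed yields a benefit equal to the number of good edges removed, which is a sum of pairwise terms. We fix the seed bits one at a time, each time choosing the bit value that does not decrease the conditional expectation. The work is $O(|H|\cdot\text{seed length})$, which is linear in $m_i$ once $|H|\le m_i/\log n$.

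I expect the main obstacle to be the sparsification step: producing $H$ deterministically in linear work while retaining a constant fraction of the benefit. The first thing I would try is to sample within each degree class by deterministic hashing of edge indices, with a pairwise-independent hash on a small seed. This keeps, for each good vertex, about a $1/\poly\log n$ fraction of its lower-degree neighbors, so that Luby's analysis goes through with the scaled degrees. If the constant-fraction guarantee can only be shown in expectation over the hash, I would fix the hash seed by the same conditional-expectation argument, estimated on a further subsample, so that the per-phase work stays $O(n_i+m_i)$.
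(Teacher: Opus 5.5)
There is a genuine gap. Your outer loop is sound: if each phase deleted a constant fraction of the residual edges in $O(n_i+m_i)$ work, the costs would sum to $O(n+m)$. But all of the difficulty sits in the step you defer, and the mechanism you sketch for it does not deliver it.

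First, unless $H$ is an induced subgraph, an independent set computed on $H$ need not be independent in $G_i$, because discarded edges can join two chosen vertices. Once those conflicts are repaired, the constant-fraction removal has to be re-proved against all of $G_i$. The relevant objective therefore again has $\Theta(m_i)$ terms and cannot be confined to $H$.

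Second, fixing the hash seed by conditional expectations means evaluating an objective over all of $G_i$ at each of $\Theta(\log n)$ seed bits. That is $\Theta(m_i\log n)$ work per phase and $\Theta(m\log n)$ overall. The linear-work seed fixing of Theorem~\ref{thm:monomial-linear} does not rescue this. It relies on every constraint being a function of a single $x_u\oplus x_v$ of uniform bits, so that each constraint is decided at exactly one stage (Lemma~\ref{lem:stage}). Neither Luby's non-uniform marking events nor a ``$v$ keeps enough sampled lower-degree neighbors'' objective has that form.

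``Estimating on a further subsample'' only pushes the same problem one level down, since that subsample needs its own deterministic guarantee. Making a round-by-round derandomization of this kind truly linear is exactly what the best previous approach (Ghaffari and Grunau) did not achieve; it still pays a $\poly(\log\log n)$ factor.

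The idea you are missing is that no MIS round ever has to be derandomized in linear work. The paper's sparsification is not the subgraph $H$ you guessed but a vertex partition that is used sequentially.

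The paper derandomizes only an edge-balanced bisection. Vertices of consecutive degrees are paired, and each pair is required to be split. Then both sides induce at most $5m/12$ edges as soon as at least $m/2$ edges are cut; the case $\Delta>2m/3$ is handled directly. The cut size of a pair-respecting bipartition is a genuine $x_i\oplus x_j$ bit-pair benefit function, so it is solvable in $O(n+m)$ work. Recursing gives $\poly\log n$ pieces whose induced subgraphs contain at most $m/\log^a n$ edges in total (Theorem~\ref{cor:aggregate-sparsification}).

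The pieces are then processed left to right, as in the greedy algorithm:
\begin{enumerate}
\item delete the vertices of the next piece that have a neighbor in the current independent set, so each edge is inspected $O(1)$ times overall;
\item add isolated vertices directly;
\item run any existing deterministic MIS algorithm with $\log^a n$ overhead on the rest, which has at most $2m_i$ vertices.
\end{enumerate}
The overhead is thus paid only against $\sum_i m_i\le m/\log^a n$ (Theorem~\ref{thm:extendable-linear}). Your proposal needs either this bootstrapping step or an actual linear-work construction of $H$ together with a proof that it preserves a constant fraction of the edge removal in $G_i$.
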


\begin{theorem}[cf.\ Corollary~\ref{cor:coloring-linear}]\label{thm:intro-coloring}
A $({\deg}+1)$-coloring of any graph---and in particular a
$(\Delta+1)$-coloring---can be computed deterministically in $O(n+m)$ work
and $\poly(\log n)$ depth.
\end{theorem}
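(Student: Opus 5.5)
The plan is to reduce $(\deg+1)$-list-coloring to a sequence of geometrically shrinking subproblems, paying $O(\log n)$-overhead work only on graphs that have already been made smaller by a $\poly\log n$ factor. The sections on benefit problems and graph sparsification suggest the shape.

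\textbf{Step 1: partial progress.} Formulate a ``benefit'' problem. We are given the current uncolored graph $H$ and each vertex's residual palette, whose size exceeds its residual degree. We deterministically find, in $O(|V(H)|+|E(H)|)$ work and polylogarithmic depth, a partial proper coloring such that the total weight $\sum_v(1+\deg_H(v))$ of colored vertices, together with edges resolved, is a constant fraction of $|V(H)|+|E(H)|$.

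\textbf{Step 2: derandomize the one-round algorithm, but on a small graph.} Use the randomized one-round trial algorithm: each vertex picks a random color from its palette and keeps it if no neighbor picked it. It colors a constant fraction of the weight in expectation, and its analysis needs only pairwise independence. To derandomize it cheaply:
\begin{itemize}
\item First sparsify $H$ deterministically. Bucket vertices by degree class $\MSB(\deg)$, and pass to a subgraph with $O((|V|+|E|)/\poly\log n)$ edges whose expected benefit still controls the benefit in $H$. For example, keep a pairwise-independent, hash-defined subsample of edges, fixed by conditional expectations at the cost of one $\log n$ factor on the small object.
\item Then run the method of conditional expectations over an $O(\log n)$-bit pairwise-independent seed on the sparsified graph, for instance via Luby/Han-style bit-by-bit fixing with $O(\log n)$ work per edge per bit.
\end{itemize}
Since the sparsified graph has size $(n+m)/\poly\log n$, the $O(\log^2 n)$ work overhead is absorbed, and Step 1 holds in linear work.

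\textbf{Step 3: iterate.} Recurse on the uncolored vertices, whose residual degrees and palettes keep the $(\deg+1)$ invariant. Because the measure $|V|+|E|$ drops by a constant factor per round, total work is a geometric series, $O(n+m)$. The depth is $O(\log n)\cdot\poly\log n$.

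\textbf{MIS.} For MIS the same template applies with Luby's step. Here the benefit is the number of edges removed by adding a pairwise-independently chosen independent set and deleting its neighborhood, which is a constant fraction of the edges. Isolated vertices are handled directly. Alternatively, I would reduce MIS to coloring: with a $(\deg+1)$-coloring, one could process color classes greedily, but that gives depth $\Delta$, so the direct benefit route is preferable.

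\textbf{Main obstacle.} The sparsification step is the hard part. The expected benefit of the trial algorithm depends on degrees in $H$, while conditional expectations must be computed on the sparse subgraph. One must show that the pessimistic estimator evaluated on the sampled subgraph lower-bounds the true benefit. My first attempt would be to restrict, per degree class, to sampled vertices with rate $\approx 1/\poly\log n$. I would charge benefit only on sampled vertices whose sampled-neighborhood counts concentrate. Concentration would come from fixing the sample itself by a cheap deterministic $k$-wise independent construction, and the vertices where concentration fails would be deferred to later rounds.
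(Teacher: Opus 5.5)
There is a genuine gap, and it is the step you flag as your ``main obstacle'': Step~2 asserts that one round of trial coloring can be derandomized in linear work, and the sparsification you propose cannot deliver this as described. Whether a vertex keeps its trial color depends on the choices of \emph{all} its neighbors in $H$. If you fix the seed by conditional expectations on a sampled subgraph, you only guarantee that the estimator restricted to the sampled edges reaches its mean. The seed is chosen after, and as a function of, the sample, so conflicts on the discarded edges are not controlled at all. An estimator that does see those edges must be evaluated over all of $H$ at every stage. Theorem~\ref{thm:monomial-linear} does not make that linear: it covers only sums of functions of single bits $x_u \oplus x_v$, each decided at exactly one stage. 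A collision event $c_u = c_v$ between multi-bit colors drawn from palettes of different sizes depends on many seed bits and stays live through many stages. ``Fixing the subsample by conditional expectations on the small object'' has the same circularity, because the quality of a sample is a property of $H$, not of the sample.

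Your fallback is not carried out either. Derandomizing concentration of sampled-neighborhood counts through a $k$-wise independent space needs estimators over $k$-tuples of neighbors, and you give no argument that this is cheap. And if only vertices sampled at rate $1/\poly\log n$ participate, a round colors only a $1/\poly\log n$ fraction of the weight, which breaks the geometric series in Step~3. A smaller point: if every vertex tries at once, pairwise independence and the union bound give only $\Pr[v\text{ keeps its color}] \ge 1/(\deg(v)+1)$. That is colored weight about $n$, not a constant fraction of $n+m$, so you need Luby's activation with probability $1/2$. Since the best prior deterministic bound here was $O((n+m)\log^2 n)$, the unproved step is essentially the whole problem.

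The idea you are missing is that no progress step has to be derandomized in linear work. The paper uses that $(\deg+1)$-coloring is \emph{extendable}. Given a coloring of the earlier vertices, the palettes of a new set $W$ are updated in $O(1+\deg(v))$ work per $v \in W$. A vertex with no neighbor inside $W$ is colored in $O(1)$ from its nonempty residual palette. So it suffices to build deterministically, in $O(n+m)$ work, a partition into $\poly\log n$ pieces whose induced subgraphs together contain at most $m/\log^a n$ edges (Theorem~\ref{cor:aggregate-sparsification}). You then process the pieces left to right and run Luby's polylog-overhead algorithm on each residual piece, which has at most $2m_i$ non-isolated vertices. This costs $O(m)$ in total (Theorem~\ref{thm:extendable-linear}).

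The only derandomization is in the partition, built by recursive bisection. Pairing consecutive vertices in degree order and forcing each pair to opposite sides bounds the imbalance by $\Delta/2$ (Lemma~\ref{lem:split-pair-balancing}). Under this pairing the cut size is a genuine XOR-form bit-pair benefit, so Theorem~\ref{thm:monomial-linear} finds a cut of at least $m/2$ edges in linear work. That leaves at most $5m/12$ edges per side (Lemma~\ref{lem:two-graph-sparsification}). Your per-degree-class vertex sampling is a randomized cousin of this partition. What you lack is a deterministic linear-work construction of it, and the observation that once you have it, cheaply derandomizing a trial-coloring round is unnecessary.
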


To the best of our knowledge, these are the first work-efficient deterministic parallel algorithms for either problem. Theorem~\ref{thm:intro-coloring} improves the best known deterministic work
bound for $(\Delta+1)$-coloring by a factor of $\log^2 n$, Theorem~\ref{thm:intro-mis} removes the final $\poly(\log\log n)$ overhead of \cite{GhaffariGrunau2025TrueWorkEfficiency}, and together with the maximal matching algorithms of Kelsen and Han they complete the picture for the three classical symmetry-breaking problems. The complete argument, including all proofs, is elementary and short, and it employs no derandomization tool beyond the pairwise-independence framework of Luby \cite{Luby1993RemovingRandomness} in the linear-work form given to it by Han \cite{han1995improvement}.

Our approach rests on the observation that both MIS and $({\deg}+1)$-coloring
are extendable: a partial solution determines a residual instance on
any set of unprocessed vertices at the cost of inspecting each edge joining
that set to the processed part exactly once---for MIS by deleting the
vertices that neighbor the current independent set, and for coloring by
removing from each list the colors of processed neighbors. A solution may
therefore be assembled piece by piece, exactly as in the sequential greedy
procedure, at a cost of $O(n+m)$ for all extension steps.

The computation on these pieces can be made cheap deterministically by an edge-balanced partitioning
routine that constitutes the technical core of the paper.
Theorem~\ref{cor:aggregate-sparsification} shows that, in $O(n+m)$ work and $\poly(\log n)$ depth, the vertex set of any graph can be partitioned into$\poly(L)$ parts whose  induced subgraphs jointly contain at most $m/L$ edges. The partition is computed by recursive bisection which decreases the internal edges in each part by a factor of $5/12$. This balance is enforced combinatorially by pairing vertices of consecutive degrees requiring that every pair be in different parts of the final partition.
Derandomization enters only in computing such a partition: the number of edges cut by a random bipartition is a sum of functions of pairs of coordinates of the assignment, this is an instance of Luby's bit-pair benefit problem \cite{Luby1993RemovingRandomness}, which
Han's refinement \cite{han1995improvement} solves in $O(n+m)$ work.

Together, these ingredients upgrade any suboptimal deterministic algorithm for an extendable problem to a work-efficient one. By taking $L = \log^{a} n$, the pieces jointly contain at most $m/\log^{a} n$ edges, so an algorithm suboptimal by a factor of $\log^{a} n$ solves all of them in $O(m)$ total work. We isolate this as
Theorem~\ref{thm:extendable-linear}. For MIS, the instances arising on the pieces are ordinary MIS instances, and any of the deterministic algorithms with $\poly\log n$ overhead discussed earlier can serve as the subroutine. For coloring, the pieces carry list instances, since colors used by earlier neighbors are removed from
the palettes, and we use the algorithm of Luby
\cite{Luby1993RemovingRandomness}, which, while presented as a $(\Delta+1)$-coloring algorithm, directly works with no modification to produce a coloring by palettes of size ${\deg} + 1$, provided they are in sorted order. This yields Theorems~\ref{thm:intro-mis} and~\ref{thm:intro-coloring}.

In Section~\ref{sec:benefit-problems} we review Luby's benefit problems, and give a self-contained
linear-work solution of the bit pair benefit problem in the style of Han. In Section~\ref{sec:sparsification} we construct the deterministic sparse partition.
In Section~\ref{sec:derandom} we prove the bootstrapping theorem and derive the algorithms for MIS and $({\deg}+1)$-coloring.

%% file: short-note/prelims.tex
\section{Preliminaries}\label{sec:prelims}

\paragraph{Model of computation.}
We work in the standard work--depth model \cite{JaJa1992,Blelloch1996}: an
algorithm runs on $p$ processors sharing a memory, its work $W(A)$ is the total number of its computational steps, and its depth $D(A)$ is the length of its longest chain of sequentially dependent steps. The
underlying machine is the \textsc{common} CRCW PRAM
\cite{KarpRamachandran1990} in which concurrent reads of a memory location are unrestricted, and concurrent writes are permitted only when all writers write the same value. We note that our maximal independent set algorithm requires no concurrent writes and runs on the weaker CREW PRAM machine while our coloring algorithm requires them only in its extension step, when the colors already used among the neighbors of a vertex are recorded. Throughout, a machine word consists of $\Theta(\log n)$ bits, where $n$ is the number of vertices of the input graph, and the standard arithmetic, comparison, and bitwise
operations on words each cost a single step. The input graph is provided as an array of $n$ vertices and each vertex $v$ possesses an array listing the neighbors of
$v$. 

\paragraph{Basic subroutines.}
All prefix sums of an array of $k$ numbers can be computed in $O(k)$
work and $O(\log k)$ depth \cite{JaJa1992}. As a consequence, an array
can be \emph{filtered} within the same bounds: the entries satisfying a
predicate testable in constant time can be moved into a contiguous
array in their original order. Sorting $k$ items under constant-time
comparisons costs $O(k \log k)$ work and $O(\log k)$ depth
\cite{Cole1988MergeSort}. We further use the following two sorting
lemmas.

\begin{lemma}\label{lem:small-range-sort}
    A list of $m$ items with integer keys in $[1, R]$ can be stable sorted
    deterministically in $O(m + R)$ work and $O(R + \log m)$ depth.
\end{lemma}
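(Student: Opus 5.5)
We prove the lemma with a parallel counting sort, processing the items in blocks so that both the bucket counts and the prefix sums stay within the stated budgets. Fix a block size $b = \max\{1,\lceil m/R\rceil\}$ and cut the input array into $B = \lceil m/b\rceil \le R$ contiguous blocks of at most $b$ items each. The algorithm has three phases: counting inside blocks, global prefix sums to compute destinations, and placement. The target is $O(m+R)$ work overall, which forces the number of (block, key) counters to be $O(m+R)$. With $B\le R$ blocks there are $BR\le R^2$ counters, so a full table of counters is too large, and the counting must be organized differently.

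\emph{Counting.} We run the phases sequentially over the keys $k=1,\dots,R$, one key per round of constant or logarithmic depth. In round $k$, every item computes the bit $[\text{key}=k]$ in $O(m)$ total work. Doing this for every $k$ would cost $O(mR)$, so instead we filter the items by key. A one-pass stable sort by key is precisely the problem we are trying to solve, so we avoid it and use the $O(R)$ depth allowance differently: each processor owns one block of $b$ items and scans it sequentially. Each block $j$ keeps a local count array indexed by the keys that actually occur in the block. To get this array, the block sorts its own $\le b$ keys sequentially by a local counting pass over a shared scratch array of size $R$. That pass would cost $O(R)$ per block, which is again too much, and this is the crux: memory initialization would cost $O(BR)$.

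\emph{Fix.} We process the blocks in $O(R/\text{?})$ waves instead. Concretely, we allocate one shared count table $C[1..R]$ and handle one block at a time in $O(b)$ sequential steps, but this gives depth $O(m)$. So the actual plan is the classical one. Take $b = \lceil m/R\rceil$ when $m \ge R$, so that $B\le R$, and keep an $R\times B$ counter table of size $RB \le R\cdot(m/b+1)=O(m+R)$ when $b\approx m/R$. Indeed $B\approx R$ gives $RB\approx R^2$, which is not $O(m+R)$ unless $R^2\le m$, so this choice works only in the regime $m\ge R^2$. For that regime the steps are as follows.
\begin{enumerate}
\item Each block processor sequentially counts its keys in $O(b)=O(m/R)$ steps, writing $C[k][j]$.
\item Run a prefix sum over $C$ in key-major order, $(k,j)$ lexicographic, in $O(RB)$ work and $O(\log m)$ depth.
\item Each block rescans its items in order and writes each item to position $\mathrm{offset}[k][j]$, incrementing that offset. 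This is stable and takes $O(b)$ depth.
\end{enumerate}
In the other regime we pick a block size $b=R$, so that $B=m/R$ and the table has size $RB=m$, and the depth is $O(b)=O(R)$. So in general we choose $b=R$ (or $b=\max(R,1)$). This gives $B=\lceil m/R\rceil$ blocks and a table of size $R\lceil m/R\rceil \le m+R$. The total work is $O(m+R)$ for counting, prefix sums and placement, and the depth is $O(R)$ for the sequential scans plus $O(\log(m+R))$ for the prefix sums, i.e.\ $O(R+\log m)$.

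The only delicate point, and the one I expect to be the main obstacle, is the table-size accounting: the counter table must have size $O(m+R)$ while its initialization to zero and its prefix sums remain within budget. The choice of block size equal to $R$ is what makes the scan depth match the allowed $O(R)$ term exactly. Stability follows because the prefix sums are taken in key-major, block-minor order and each block scans its own items left to right, so items with equal keys keep their original relative order.
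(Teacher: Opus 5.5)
Your final algorithm is correct and is essentially the paper's proof. It uses blocks of $R$ consecutive items, per-block sequential histograms forming an $R\times\lceil m/R\rceil$ table of size at most $m+R$, a key-major prefix sum, and an in-order placement pass within each block (the paper instead records within-block ranks during the counting pass, which makes no difference). You should delete the false starts, in particular the $b=\lceil m/R\rceil$ variant for $m\ge R^2$: it stays within the work budget but needs $O(m/R)$ sequential depth, which violates the $O(R+\log m)$ depth bound.
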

\begin{proof}
    Partition the items into $\lceil m/R \rceil$ blocks of at most $R$
    consecutive items. Each block, in parallel, computes sequentially
    in $O(R)$ time its key histogram and the rank of each of its items
    among equal keys within the block. The histograms form an
    $R \times \lceil m/R \rceil$ matrix of counts with $O(m + R)$
    entries, and a prefix sum over this matrix in key-major order \cite{JaJa1992} yields, for every key and block, the starting
    position of that block's items with that key in the sorted output.
    Each item then computes its output position in $O(1)$ from its
    block's starting position and its within-block rank.
\end{proof}

\begin{lemma}[Degree sort]\label{lem:degree-sort}
    The vertices of any graph with $n$ vertices and $m$ edges can be
    stable sorted by degree deterministically in $O(n + m)$ work and
    $O(\log n)$ depth.
\end{lemma}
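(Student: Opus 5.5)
The approach is to compute every degree in parallel and then sort by degree using only linear-size tools, splitting the keys at $\sqrt{n}$ so that Lemma~\ref{lem:small-range-sort} can be used within its depth budget. Each vertex $v$ has its neighbor array, so $\deg(v)$ is read off as the array length; if only a terminated list is available, a prefix sum or list length computation over all arrays costs $O(n+m)$ work and $O(\log n)$ depth. All keys lie in $[0,n-1]$. Applying Lemma~\ref{lem:small-range-sort} directly with $R=n$ would give $O(n+m)$ work but depth $O(n)$, which is too much. The split therefore treats small and large degrees differently.

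\emph{Small degrees.} The first step filters the vertices with $\deg(v)\le \sqrt{n}$, preserving their order, at cost $O(n)$ work and $O(\log n)$ depth. These are then sorted by Lemma~\ref{lem:small-range-sort} with $R=\lceil\sqrt n\rceil$, which costs $O(n)$ work and $O(\sqrt n+\log n)$ depth. This depth is still polynomial, so I refine the step by radix sorting. Write each key in base $b=\lceil\log n\rceil$, giving $O(\log n/\log\log n)$ digits. Each digit pass is a stable sort with $R=b$, costing $O(n+\log n)$ work and $O(\log n)$ depth. The total work would be $O(n\log n/\log\log n)$, however, which is too much. The base must instead be large while the depth stays logarithmic.

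\emph{Exploiting $m$.} The fix is to charge work to edges. A vertex of degree $d$ can pay $O(d)$ work, so high-degree vertices are few relative to $m$: at most $2m/d$ vertices have degree at least $d$. Put each vertex in a bucket by $\lfloor\log_2 \deg(v)\rfloor$, using Lemma~\ref{lem:small-range-sort} with $R=O(\log n)$; this costs $O(n)$ work and $O(\log n)$ depth. Bucket $j$ holds $n_j\le 2m/2^j$ vertices, with keys in $[2^j,2^{j+1})$. Bucket $j$ is then sorted by comparison sort with \cite{Cole1988MergeSort}, costing $O(n_j\log n_j)$ work and $O(\log n)$ depth. The total work is $\sum_j n_j\log n_j$. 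To bound this, split each bucket by $n_j\log n_j \le n_j\cdot j$ when $j\ge \log n_j$. Otherwise $2^j< n_j$, and the bucket is sorted instead by Lemma~\ref{lem:small-range-sort} with $R=2^j$, costing $O(n_j)$ work and $O(2^j+\log n)$ depth. That depth is fine only if $2^j\le\log n$. I therefore take a threshold: buckets with $2^j\le \log n$ use counting sort, and buckets with larger $j$ use comparison sort. In the second case the work is $n_j\log n_j\le n_j\log n\le n_j 2^j\le \sum_{v} 2\deg(v)$, which is $O(m)$ overall. Bucket $j$ may have $n_j$ large with $\log n_j> j$, but $\log n_j\le\log n\le 2^j$, so the bound still holds.

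\emph{Stability and assembly.} Comparison sorts are made stable by breaking ties with the original index. Concatenating the buckets in order of $j$ is done by a prefix sum of the bucket sizes. The main subtlety is exactly this threshold accounting: each vertex with $\deg(v)>\log n$ pays $\log n$ out of its own degree, while the vertices with $\deg(v)\le\log n$ have keys in a range of size $O(\log n)$. In fact this gives a simpler route: a single counting sort over all low-degree vertices with $R=\log n$, one comparison sort over all high-degree vertices costing $O(n'\log n)\le O(m)$, and a final concatenation.
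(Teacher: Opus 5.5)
Your final route (filter at degree $\lceil\log n\rceil$, stable counting sort of the low-degree vertices via Lemma~\ref{lem:small-range-sort}, merge sort of the at most $2m/\log n$ high-degree vertices in $O(m)$ work, then concatenation) is exactly the paper's proof, and it is correct. The earlier bucketing by $\lfloor\log_2\deg(v)\rfloor$ is also valid, provided degree-$0$ vertices get their own bucket and $\lfloor\log_2\rfloor$ is computed in constant time (e.g.\ via an MSB table); but it, along with the abandoned $\sqrt n$ and radix attempts, is unnecessary, so a written proof should present only the final route.
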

\begin{proof}
    Set $\theta := \lceil \log n \rceil$ and split the vertex list, by a filter, into the vertices of degree at most $\theta$ and the remaining vertices, each group retaining its input order. The first group is stable sorted by Lemma~\ref{lem:small-range-sort} with key
    range $[1, \theta+1]$, at a cost of $O(n + \theta)$ work and $O(\theta + \log n) = O(\log n)$ depth. Since the degrees sum to $2m$, the second group contains at most $2m/\theta$ vertices, and it is sorted by parallel merge sort \cite{Cole1988MergeSort} at a cost of
    $O\big((m/\theta)\log m\big) = O(m)$ work and $O(\log n)$ depth. Every degree in the first group is smaller than every degree in the second, so the
    concatenation of the two sorted lists is the stable sorted output, and the total cost is $O(n+m)$ work and $O(\log n)$ depth.
\end{proof}

%% file: short-note/benefit-problems.tex
\section{Benefit problems}\label{sec:benefit-problems}

The general benefit problem as formulated by Luby in \cite{Luby1993RemovingRandomness} is as follows: let $B$ be a function, called the \emph{benefit function}, from $\{0,1\}^n \to \mathbf R$. A point $x \in \{0,1\}^n$ is \emph{good} if $B(x) \geq \E_\alpha[B(\alpha)]$ where $\alpha$ is a uniformly random point of $\{0,1\}^n$, that is the benefit of $x$ is at least the average benefit. The \emph{benefit problem} for $B$ is then to efficiently find a good point. While Luby formulates the general benefit problem, he solves it in restricted cases where pairwise independence can be exploited to shrink the seed space; this is sufficient for many applications.

\subsection{Bit pair benefit problem}
The \emph{bit pair} benefit problem assumes that the benefit function can be written as a sum of auxiliary functions 
\[
    B(x) = B(x_1,\dots,x_n) = \sum_{u<v} \Psi_{uv}(x_u,x_v) + \sum_u \Psi_u(x_u) + c.
\]
Given a representation exists with at most $m$ nonzero auxiliary functions computable in constant time, that function $B$ is called a \emph{bit pair benefit function} of size $m$. This representation is exploited by observing that if the bits $x_u$ are pairwise independent and uniform, the expected value is the same as if the entire $x$ was uniformly sampled. As such, the size of the seed space can be dramatically reduced to the point where binary searching the seed space for a good point solves the problem with logarithmic overhead. Evaluating the quantities needed to do the binary search takes $O(m)$ work.
This is the approach taken by Luby in \cite{Luby1993RemovingRandomness}, which solves the benefit problem for $B$ deterministically in $O((n+m) \log^2 n)$ work and $O(\log^2 n)$ depth.

The inefficiency of this approach is primarily due to evaluating the benefit function at every step of the binary search. Han \cite{han1995improvement} later exploited, following an observation of Luby, that if one tracks the influence of each constraint more carefully, a linear algorithm can be obtained. After modifying the auxiliary functions so that they only depend on the xor of their inputs (which can be done without changing the expectation overall), each constraint is only relevant to deciding one of the bits of the seed of a good point. By grouping the relevant constraints together, redundant work is eliminated. The paper of Han only applies this observation to the problem of maximal matching, but it applies generally, and we require a slightly more complicated form. Thus, we now prove the following theorem by the same approach.

\input{short-note/linear-bit-pair-problem}

%% file: short-note/linear-bit-pair-problem.tex
\begin{theorem}\label{thm:monomial-linear}
    Given a collection of constraint pairs
    $\mathcal P \subseteq \binom{[n]}{2}$ of size $m$, and
    corresponding auxiliary functions
    $\Psi_p \colon \{0,1\} \to \mathbf R$, the benefit problem for
    \[
        B(x) := \sum_{p = \{u, v\} \in \mathcal P} \Psi_p(x_u \oplus x_v)
    \]
    can be solved deterministically in $O(n + m)$ work and
    $O(\log m \log n)$ depth.
\end{theorem}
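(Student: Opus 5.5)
We use Luby's pairwise-independent sample space and Han's idea of charging each constraint to one seed bit. Let $k := \lceil \log_2 (n+1)\rceil$ and label each $u\in[n]$ by a distinct nonzero vector $\ell(u)\in\{0,1\}^k$. A seed $s \in \{0,1\}^k$ determines the point $x_u(s) := \langle \ell(u), s\rangle \bmod 2$.

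For $u\neq v$ we have $x_u(s)\oplus x_v(s) = \langle \ell(u)\oplus\ell(v), s\rangle$. Since $\ell(u)\oplus\ell(v)\neq 0$, this bit is uniform when $s$ is uniform. Hence $\E_s[B(x(s))] = \sum_p \tfrac12(\Psi_p(0)+\Psi_p(1)) = \E_\alpha[B(\alpha)]$. So it suffices to find a seed $s$ with $B(x(s))$ at least this average. We will fix the seed bits $s_1,\dots,s_k$ one at a time by the method of conditional expectations, keeping the conditional expectation from decreasing.

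The key structural point is the following. For a pair $p=\{u,v\}$, let $w_p := \ell(u)\oplus\ell(v)$ and let $j(p)$ be the index of its highest set bit, computable in $O(1)$ as an $\MSB$ of a word. Suppose $s_1,\dots,s_{i-1}$ are fixed and $s_i,\dots,s_k$ are uniform, where bits are ordered from least to most significant.

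\begin{itemize}
\item If $j(p) > i$, then $\langle w_p, s\rangle$ contains the free uniform bit $s_{j(p)}$, so it remains uniform regardless of $s_i$. Its conditional contribution is the constant $\tfrac12(\Psi_p(0)+\Psi_p(1))$.
\item If $j(p) = i$, the bit $\langle w_p,s\rangle$ is determined by $s_1,\dots,s_i$.
\item If $j(p)<i$, it was already determined earlier.
\end{itemize}

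Therefore the choice of $s_i$ affects only the pairs with $j(p)=i$. For these we compare $\sum_{j(p)=i}\Psi_p(\beta_p\oplus s_i)$ over $s_i\in\{0,1\}$, where $\beta_p := \langle w_p\ \text{restricted to bits}<i, s\rangle \oplus$ (nothing from bit $i$, which is handled by the $s_i$ term). We pick the larger value. Every pair is examined at exactly one step, which is what gives linear total work.

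The algorithm runs as follows.

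\begin{enumerate}
\item Compute all $w_p$ and $j(p)$ in $O(m)$ work and $O(1)$ depth.
\item Group the pairs by $j(p)$ with Lemma~\ref{lem:small-range-sort}, using key range $[1,k]$. This costs $O(m+\log n)$ work and $O(\log n + \log m)$ depth.
\item Run $k$ sequential rounds. In round $i$, for each pair in group $i$, compute $\beta_p$ as a parity of $w_p \wedge s_{<i}$. A word-level parity is not assumed to be a single step, so we avoid it as follows. We maintain the vector $s$ fixed so far and observe that $\langle w_p, s\rangle$ over bits $<i$ equals $x_u(s_{<i})\oplus x_v(s_{<i})$. We maintain $y_u := \langle \ell(u), s_{<i}\rangle$ for all $u$ lazily. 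The final values are needed only at the end, and in round $i$ only the endpoints of group-$i$ pairs are needed.
\end{enumerate}

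Maintaining $y_u$ for every $u$ in every round would cost $O(nk)$ work, which is too much. This is the main obstacle. The first thing I would try is to compute the needed parities directly. Treat $s_{<i}$ as a word; then $\langle \ell(u), s_{<i}\rangle$ is the parity of the word $\ell(u)\wedge s$. Parity of a $\Theta(\log n)$-bit word can be obtained in $O(1)$ time by table lookup on halves, using a precomputed table of size $O(\sqrt n)$ built in $O(\sqrt n)$ work and $O(\log n)$ depth. With this table, each $\beta_p$ costs $O(1)$.

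Round $i$ then sums $\Psi_p(\beta_p)$ and $\Psi_p(\beta_p\oplus1)$ over the $m_i$ pairs of group $i$ by two prefix sums. This costs $O(m_i)$ work and $O(\log m)$ depth. We then set $s_i$ to the better value.

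Summing over all rounds gives $O(n+m)$ work and $O(k\log m)=O(\log n\log m)$ depth. Finally we output $x_u = \mathrm{parity}(\ell(u)\wedge s)$ for all $u$ in $O(n)$ work.

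For correctness, the conditional expectation is nondecreasing in each round. It starts at $\E_\alpha[B(\alpha)]$ and ends at $B(x(s))$, so $x(s)$ is good.
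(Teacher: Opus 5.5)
Your proof is correct and is essentially the paper's argument: Luby's linear seed space, seed bits fixed from least to most significant so that each pair is decided exactly at the stage of the most significant set bit of $\ell(u)\oplus\ell(v)$, pairs grouped by stage with the small-range sort, and the lower-order parity evaluated by table lookup. The differences are cosmetic: you use a half-word parity table of size $O(\sqrt n)$ instead of the paper's $O(n)$-size $\Par$ table, and you compare both values of $s_i$ directly instead of taking the sign of the normalized sum $F_t$; the one small fix is that, since you do not treat parity as a unit-cost operation, the most-significant-bit computation should also come from a precomputed table, as the paper does.
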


Here $\mathcal P$ may be a multiset, in which case repeated pairs
contribute additively. This costs nothing below and is convenient for our later application in
Section~\ref{sec:sparsification}.

We first normalize the problem. Writing
$\Psi_p(y) = \tfrac{\Psi_p(0)+\Psi_p(1)}{2} + \beta_p (-1)^y$ with
$\beta_p := \tfrac{\Psi_p(0)-\Psi_p(1)}{2}$, we obtain
\[
    B(x) = \E_\alpha[B(\alpha)] +
    \sum_{p = \{u,v\} \in \mathcal P} \beta_p (-1)^{x_u \oplus x_v}.
\]
A good point is
therefore a point at which the weighted sum 
$\sum_p \beta_p (-1)^{x_u \oplus x_v}$ is nonnegative.

\subsection{Fixing a seed instead of the variables}\label{subsec:seed-fixing}
Following Luby, we do not choose the bits $x_u$ directly but draw them
from a small structured space and apply conditional expectations to the
seed of that space, which has only $O(\log n)$ bits. Set
$\ell := \lceil \log_2 n \rceil$ and regard each
index $u \in \{0,\dots,n-1\}$ as an $\ell$-bit string. For a seed $r \in \{0,1\}^{\ell}$, define
\[
    x_u(r) := \bigoplus_{t=1}^{\ell} u_t\, r_t .
\]

This is the pairwise independent space given in Luby's original paper
\cite{Luby1993RemovingRandomness}. For a constraint $p = \{u,v\}$,
\[
    x_u(r) \oplus x_v(r) = \bigoplus_{t=1}^{\ell}  (d_p)_t\, r_t,
    \qquad d_p := u \oplus v \neq 0,
\]
so the value of the constraint is the sum mod two of the seed
coordinates at the positions where $d_p$ has a set bit, and at least
one such position exists. Suppose some such position $t$ is unfixed,
and write the value of the constraint as $r_t \oplus b$, where $b$ is the sum mod two of the seed coordinates at the remaining set
bits of $d_p$. Conditioned on arbitrary values of every coordinate
other than $r_t$, the bit $b$ is determined while $r_t$ is uniform, so
the value of the entire constraint is uniform as long as any nonzero bit of $d_p$ is unfixed. In particular, under a fully
uniform seed every constraint is a uniform bit and $\E_r[B(x(r))] = \E_\alpha[B(\alpha)]$ follows
by linearity of expectation.

The following structure essentially observed by Han in a less general form \cite{han1995improvement}, is the only new input to Luby's original strategy. Define the \emph{stage} of $p$ as
$t(p) := \max\{t : (d_p)_t = 1\}$, the most significant set bit of
$d_p$, and for $0 \leq t \leq \ell$ let
\[
    \Phi_t := \E_r\big[B(x(r)) \big| r_1, \dots, r_t\big]
\]
be the expectation over a uniform choice of the remaining seed
coordinates.

\begin{lemma}\label{lem:stage}
    Fix the seed coordinates in the order $r_1, r_2, \dots, r_\ell$.
    Then $\Phi_0 = \E[B(\alpha)]$, and for each stage $t$,
    \[
        \Phi_t - \Phi_{t-1} = (-1)^{r_t}\, F_t,
        \qquad F_t := \sum_{p\,:\,t(p) = t} \beta_p\, (-1)^{s_p},
        \qquad s_p := \bigoplus_{t' < t(p)} (d_p)_{t'}\, r_{t'} .
    \]
\end{lemma}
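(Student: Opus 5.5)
The plan is to compute $\Phi_t$ term by term using the normalized form of $B$ and the uniformity observation from Subsection~\ref{subsec:seed-fixing}. By linearity of conditional expectation,
\[
    \Phi_t = \E_\alpha[B(\alpha)] + \sum_{p=\{u,v\}\in\mathcal P} \beta_p\, \E_r\bigl[(-1)^{x_u(r)\oplus x_v(r)} \bigm| r_1,\dots,r_t\bigr].
\]
So it suffices to determine, for each $p$, the conditional expectation of $(-1)^{\bigoplus_{t'} (d_p)_{t'} r_{t'}}$ given the first $t$ seed coordinates.

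First step: split by comparing $t(p)$ with $t$. If $t(p) > t$, then the most significant set bit of $d_p$ sits at an unfixed coordinate. By the argument already given in Subsection~\ref{subsec:seed-fixing}, the constraint value is then a uniform bit conditioned on $r_1,\dots,r_t$, so the term has conditional expectation $0$. If $t(p)\le t$, then every set bit of $d_p$ lies among $r_1,\dots,r_t$, so the value is determined. It equals $(-1)^{r_{t(p)}\oplus s_p}=(-1)^{r_{t(p)}}(-1)^{s_p}$, since $d_p$ has no set bits above $t(p)$ and $(d_p)_{t(p)}=1$. Hence
\[
    \Phi_t = \E_\alpha[B(\alpha)] + \sum_{p\,:\,t(p)\le t} \beta_p (-1)^{r_{t(p)}} (-1)^{s_p}.
\]

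Second step: read off both claims. For $t=0$ the sum is empty, because every $d_p\neq 0$ forces $t(p)\ge 1$; this gives $\Phi_0=\E[B(\alpha)]$. Subtracting the formulas for $t$ and $t-1$ leaves exactly the pairs with $t(p)=t$, which gives $(-1)^{r_t}F_t$. Note that $s_p$ depends only on $r_{t'}$ with $t'<t$, so $F_t$ is known before $r_t$ is fixed.

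The only delicate point is the uniformity claim for $t(p)>t$. It needs a set bit at an unfixed position independent of the others, and the top bit $t(p)$ serves as that position. Everything else is bookkeeping.
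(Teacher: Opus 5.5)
Your proof is correct and follows the paper's argument. Constraints with $t(p)>t$ have conditional expectation equal to their average, those with $t(p)\le t$ are determined as $(-1)^{r_{t(p)}\oplus s_p}$, and the difference $\Phi_t-\Phi_{t-1}$ isolates exactly the stage-$t$ constraints. Writing the closed form for $\Phi_t$ first and then subtracting is a slightly cleaner rendering of the paper's step that setting bit $t$ ``changes only the contribution of the constraints with $t(p)=t$''.
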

\begin{proof}
    A constraint with $t(p) > t$ has the unfixed coordinate $t(p)$ in
    its support, so its conditional expectation after stage $t$ is the
    constant $\tfrac{\Psi_p(0)+\Psi_p(1)}{2}$, regardless of the
    choices made; in particular
    $\Phi_0 = \sum_p \tfrac{\Psi_p(0)+\Psi_p(1)}{2} = \E[B(\alpha)]$. A
    constraint with $t(p) \leq t$ has all of its support fixed, and its
    value is $\Psi_p(s_p \oplus r_{t(p)})$. Having fixed the first $t-1$ bits of the seed, setting bit $t$ of the seed changes only the contribution of the
    constraints with $t(p) = t$. Accounting for the normalization, the change is thus
    \[
        \sum_{t(p)=t} \Big( \Psi_p(s_p \oplus r_t)
        - \tfrac{\Psi_p(0)+\Psi_p(1)}{2} \Big)
        = \sum_{t(p)=t} \beta_p (-1)^{s_p \oplus r_t}
        = (-1)^{r_t} F_t . \qedhere
    \]
\end{proof}

Lemma~\ref{lem:stage} determines the algorithm. Group the
constraints by stage; then for $t = 1, \dots, \ell$ in order,
compute $F_t$ by a parallel sum over the group of stage $t$, and set
$r_t := 0$ if $F_t \geq 0$ and $r_t := 1$ otherwise, so that
$\Phi_t - \Phi_{t-1} = |F_t|$. After the last stage every constraint is
determined, so
\[
    B(x(r)) = \Phi_\ell = \E_\alpha[B(\alpha)] + \sum_{t=1}^{\ell} |F_t|
    \geq \E_\alpha[B(\alpha)],
\]
and $x(r)$ is a good point. Each constraint participates in exactly one
stage, which is the source of the linear work bound. It remains to
implement the grouping and the quantities $s_p$ within it.

\subsection{Implementation in linear work}\label{subsec:seed-implementation} Each index, each
$d_p$, and the seed occupy single words. The algorithm uses two lookup tables, indexed by the $2^{\ell} \leq 2n$ possible words and computed before the stages begin. The table $\MSB$ stores in entry $w \geq 1$ the position of the most significant set bit of $w$, and
the table $\Par$ stores in entry $w$ the parity of the bits of $w$.

The table $\MSB$ is computed in $O(n)$ work and $O(1)$ depth: in
parallel over $t \in \{1, \dots, \ell\}$ and over the entries
$w \in [2^{t-1}, 2^{t})$, entry $w$ is set to $t$, and since these
ranges partition the index set, every entry is written exactly once.
The table $\Par$ is computed by induction on the word length. Write
$\Par_j$ for the table over $j$-bit words. The table $\Par_0$ consists
of the single entry $\Par[0] = 0$, and $\Par_{j+1}$ agrees with
$\Par_j$ on its first half and satisfies
$\Par[w + 2^{j}] = 1 \oplus \Par[w]$ on its second half, so it is
obtained from $\Par_j$ by one parallel copy and complement. The
$\ell$ rounds cost $O(2^{\ell}) = O(n)$ total work and $O(\log n)$
depth.

Given the tables, the quantities $d_p = u \oplus v$ and
$t(p) = \MSB[d_p]$ are computed in constant work per constraint. At
stage $t(p)$, let $r$ denote the word whose bits at the positions
fixed so far are the chosen seed coordinates, and let $\wedge$ denote
bitwise conjunction. The conjunction
$d_p \wedge r \wedge (2^{\,t(p)-1} - 1)$ retains exactly the bits
$(d_p)_{t'}\, r_{t'}$ for $t' < t(p)$, since the mask
$2^{\,t(p)-1} - 1$ removes the decisive position and every higher
position. Therefore
\[
    s_p = \Par\big[\, d_p \wedge r \wedge (2^{\,t(p)-1} - 1) \,\big],
\]
and the evaluation costs constant work per constraint.

The grouping of the constraints by decisive stage is a stable integer
sort with keys $t(p) \in \{1, \dots, \ell\}$. By
Lemma~\ref{lem:small-range-sort} with key range
$R = \ell = O(\log n)$, the sort costs $O(n+m)$ work and
$O(\log n + \log m)$ depth, and it places the constraints of each
stage in a contiguous segment of a single array. The value $F_t$ is
then computed at stage $t$ by one parallel sum over the segment of
stage $t$, in $O(\log m)$ depth.

%% file: short-note/graph-sparsification.tex
\section{Sparse Partitioning}\label{sec:sparsification}

In this section, we consider the task of partitioning the vertices of a graph $G$ with $m$ edges into $k$ pieces $V_1,\dots, V_k$ such that the number of edges among the induced subgraphs $G[V_1], \dots, G[V_k]$ sum to at most $m/L$. We call such a partition an \emph{$L$-sparse $k$-partition}. For our applications in Section \ref{sec:derandom}, it will be convenient to define a particular representation of such a partition. Call a neighbor $v$ of a vertex $u \in V_i$ \emph{left}, \emph{internal}, or \emph{right} according to whether $v$ lies in $V_1 \cup \dots \cup V_{i-1}$, in $V_i$, or in $V_{i+1} \cup \dots \cup V_k$, and call $u$ \emph{isolated} if it has no internal neighbors. The partition is \emph{ordered} if the adjacency array of every vertex lists its left
neighbors first, then its internal ones, then its right ones, with the two block boundaries marked, and each piece $V_i$ stores the list $I_i$ of its isolated vertices.

Our main technical lemma is a linear deterministic procedure to construct a single edge-balanced vertex partition.

\begin{lemma}\label{lem:two-graph-sparsification}
    The vertices of any graph $G$ may be divided into two parts so that
    the induced subgraphs $H_1$ and $H_2$ both have at most $5m/12$
    edges, deterministically in $O(n + m)$ work and $O(\poly\log n)$
    depth.
\end{lemma}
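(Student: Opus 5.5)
We plan to reduce the lemma to a single call of Theorem~\ref{thm:monomial-linear}, using the pairing device from the introduction to force the two sides to be balanced. First sort the vertices by degree with Lemma~\ref{lem:degree-sort}, obtaining $v_1,\dots,v_n$ with $\deg v_1\le\dots\le\deg v_n$. We then pair consecutive vertices $\{v_{2i-1},v_{2i}\}$, leaving at most one vertex unpaired. A bipartition is encoded by $x\in\{0,1\}^n$, with part $1$ being $\{u:x_u=0\}$. Among the constraint pairs we include, for each vertex pair, a \emph{pairing constraint} that forces $x_{v_{2i-1}}\ne x_{v_{2i}}$. The cleanest way to do this is to contract: introduce one variable $y_i$ per vertex pair and set $x_{v_{2i-1}}=y_i$, $x_{v_{2i}}=1\oplus y_i$. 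Each edge $uv$ of $G$ then becomes a constraint on $y_{i(u)}\oplus y_{i(v)}$, possibly negated, and the negation is absorbed into $\Psi_p$. An edge inside a single vertex pair is always cut and is simply dropped. The contracted instance is a multiset of at most $m$ pairs, as permitted after Theorem~\ref{thm:monomial-linear}.

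The benefit function we use is $B=\sum_{uv\in E}\Psi(\text{$uv$ cut})$, where an edge is counted by the internal edges it creates. Under uniform $y$, every non-dropped edge is cut with probability $1/2$. Minimizing uncut edges alone gives $e(H_1)+e(H_2)\le m/2$, but it does not balance the two sides. To control both sides we use a signed combination. Let $D_j=\sum_{u\in V_j}\deg u$, so that $2e(H_j)=D_j-\mathrm{cut}$. The pairing gives $|D_1-D_2|\le\sum_i(\deg v_{2i}-\deg v_{2i-1})\le\Delta$, by telescoping over the sorted order. This bound is too weak when $\Delta$ is large. We therefore plan a better choice: in each vertex pair, $D_1-D_2$ changes by the degree gap $g_i$, and we need $\mathrm{cut}\ge m/2$ deterministically from the benefit problem. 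That gives $e(H_j)\le (D_j-\mathrm{cut})/2 \le (m+\sum g_i/2 - m/2)/2$.

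The main obstacle is exactly bounding $\sum_i g_i$ when high-degree vertices dominate. The fix we would try first is to treat \emph{heavy} vertices separately: a vertex of degree greater than $m/6$, say, is placed alone. There are at most $12$ such vertices. Their edges are charged explicitly by putting heavy vertices on alternating sides by a greedy constant-size search, with the degree constraint and the cut benefit combined. Then $\sum g_i\le\Delta_{\text{light}}\le m/6$. This yields $e(H_j)\le (m+m/12)/2-\mathrm{cut}/2\le 13m/24-m/4\cdot$ ($\ldots$). The constants must be tuned so that the final bound reads $5m/12$, and we expect that tuning, together with handling the heavy vertices within linear work, to be where the real difficulty lies. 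Linear work and polylog depth follow from Lemma~\ref{lem:degree-sort}, the constant-work construction of constraints, and Theorem~\ref{thm:monomial-linear}.
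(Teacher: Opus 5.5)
\textbf{There is a gap: the proposal never proves the $5m/12$ bound.} Your setup is the paper's. You degree-sort, pair consecutive vertices, contract each pair to one variable so that every pair is split, and apply Theorem~\ref{thm:monomial-linear} to the cut count to get $\mathrm{cut}\ge m/2$. You also correctly derive $e(H_j)\le \tfrac m4+\tfrac14\sum_i g_i\le \tfrac m4+\tfrac{\Delta}{4}$. But you stop there, deferring the rest to an unspecified heavy-vertex treatment and ``tuning''.

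The missing idea is a trivial case split on $\Delta$:
\begin{itemize}
\item If $\Delta>2m/3$, separate a maximum-degree vertex $v$ from all other vertices. One side induces no edges and the other induces $m-\deg v<m/3\le 5m/12$ edges. This is detected and executed in $O(n+m)$ work and $O(\log n)$ depth.
\item If $\Delta\le 2m/3$, your own inequality already gives $e(H_j)\le m/4+m/6=5m/12$.
\end{itemize}
No special handling of heavy vertices is needed.

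\textbf{The heavy-vertex route, as sketched, fails.} Unpaired heavy vertices contribute their full degrees to $D_1-D_2$, and the estimate $\sum_i g_i\le\Delta_{\mathrm{light}}$ ignores this. Concretely, take a star with center $c$ and $m$ leaves. Then $c$ is the only heavy vertex, and the leaves are paired and each pair is split. Whichever side receives $c$ also receives about $m/2$ leaves, so it induces about $m/2>5m/12$ edges, no matter how the search places $c$. In the high-degree regime you must drop the pairing entirely, which is exactly what the case split does.

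\textbf{A smaller point on odd $n$.} The unpaired vertex must be one of minimum degree; the paper prepends a degree-zero dummy vertex to the sorted list. Your indexing $\{v_{2i-1},v_{2i}\}$ leaves $v_n$ unpaired, which adds $\deg v_n$ to $|D_1-D_2|$. That weakens the bound to about $2\Delta$, and the constants then fail for $m/3<\Delta\le 2m/3$.
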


Iterating this routine, we obtain the following construction of sparse partitions.

\begin{theorem}\label{cor:aggregate-sparsification}
    For any graph, an ordered $k$-sparse $(\poly k)$-partition can be computed deterministically in $O(n+m+\poly k)$
    work and $O(\log k \poly\log n)$ depth.
\end{theorem}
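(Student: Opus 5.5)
We obtain Theorem~\ref{cor:aggregate-sparsification} by applying Lemma~\ref{lem:two-graph-sparsification} recursively and then producing the ordered representation. Set $d := \lceil \log_{12/5} k \rceil = O(\log k)$. We run $d$ rounds of bisection. In round $j$ we hold a partition of $V$ into $2^{j}$ parts whose induced subgraphs together contain at most $(5/6)^{j} m$ edges. To see this, apply Lemma~\ref{lem:two-graph-sparsification} to each part $G[V_i]$ with $m_i$ internal edges. Its two halves have at most $5m_i/12$ edges each, so at most $10m_i/12 = 5m_i/6$ together. This is too weak: a factor $5/6$ per round would need $\log_{6/5} k$ rounds and $k^{O(1)}$ parts, which is still $\poly k$.

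So we take $d := \lceil \log_{6/5} k \rceil$. The number of parts is $2^{d} = k^{\log 2/\log(6/5)} = \poly k$, and the surviving internal edges number at most $(5/6)^{d} m \le m/k$. Alternatively, we can bisect each half again independently and accept the slightly worse exponent. Either way the count is $\poly k$ parts.

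Next we bound the work. Round $j$ is run on the disjoint graphs $G[V_i]$ in parallel. Its cost is $\sum_i O(|V_i| + m_i) = O(n + (5/6)^{j} m)$. The $n$ term is the problem, since summing it over $d$ rounds gives $O(n \log k)$, not $O(n+m)$. The fix is to drop isolated vertices from later rounds: a vertex with no internal neighbor in its current part is set aside and never processed again. Non-isolated vertices in round $j$ number at most $2 \cdot (5/6)^{j} m$, so round $j$ costs $O((5/6)^{j} m + \poly k)$. Here the $\poly k$ term covers per-part bookkeeping over the at most $2^{j}$ parts. Summing gives $O(m + \poly k)$, and adding the $O(n)$ initial pass gives the claimed bound. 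Identifying the vertices to discard and building each subgraph's adjacency arrays uses a filter, that is, prefix sums over the current internal edges.

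We keep each subgraph's edge list explicitly and recompact it after every bisection. Because the discarded vertices are isolated, they can be placed in either child part, for instance always the first, without adding internal edges. The parts form a binary tree and the leaves are ordered left to right. The depth is $d$ rounds times the $\poly\log n$ cost of Lemma~\ref{lem:two-graph-sparsification}, which is $O(\log k \poly\log n)$.

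Finally we build the ordered representation. Each vertex receives its leaf index $i \in [2^{d}]$. For each vertex we stable-partition its adjacency array into left, internal and right neighbors by comparing indices, using filters in $O(\deg)$ work per vertex. The lists $I_i$ are obtained by filtering the vertices with empty internal block and grouping them by index. The grouping is a stable sort with key range $\poly k$ by Lemma~\ref{lem:small-range-sort}, which costs $O(n + \poly k)$ work. Its depth is $O(\poly k)$, which could exceed $O(\log k\poly\log n)$. To avoid this we instead read each $I_i$ off the recursion tree, appending the isolated vertices at each leaf, or we use prefix sums over per-part counts. I expect this to be the main obstacle. The real difficulty is bookkeeping: preventing the $n\log k$ term and the $\poly k$ depth, so that both the work and the depth bounds hold at once.
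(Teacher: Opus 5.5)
Your proposal is correct and, once the false start with $\log_{12/5}$ is set aside, follows the paper's proof: $\lceil \log_{6/5} k\rceil$ levels of bisection by Lemma~\ref{lem:two-graph-sparsification}, with vertices that become isolated dropped from later levels and sent to the left child (the paper passes the parent's isolated list to the left child by a pointer), so that level $j$ costs $O((5/6)^j m)$ and the lists $I_i$ can be read off the recursion tree. The one deviation is that you build the left/internal/right ordering in a single final $O(n+m)$-work pass that compares leaf indices, whereas the paper maintains this ordering as an invariant throughout the recursion, refining each active vertex's internal range with the global $\mathit{side}$ array at every split; both versions stay within the stated work and depth bounds.
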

    
\begin{proof}
    We compute the partition by a recursion of
    $t := \lceil \log_{6/5} k \rceil$ levels, whose parts form a
    binary tree rooted at the whole vertex set. At each level, every
    part containing at least one edge is split by an application of
    Lemma~\ref{lem:two-graph-sparsification} into a left and a right child, each inducing at most $5/12$ of the
    edges of the part, while edgeless parts remain leaves. The parts
    after level $j$ therefore contain at most $(5/6)^{j} m$ edges in
    total, which is at most $m/k$ after $t$ levels, and the number of
    parts is at most $2^{t} \leq 2k^{1/\log_2(6/5)} = O(k^{4})$. The
    output parts $V_1, \dots, V_P$ are the leaves of the tree in
    left-to-right order.
 
    The recursion maintains two invariants. Each part holds its vertices in two lists, the \emph{active} vertices, which have a neighbor inside the part, and the isolated vertices, which
    have none. Furthermore the partition is ordered, that is the adjacency array of every vertex lists first its neighbors in parts to the left of its own, then its internal neighbors, and finally its neighbors in parts to the right, with two indices, the left and the right delimiter, marking the ends of the internal range. Initially the whole vertex set is one part and every neighbor is internal, and at termination the second invariant is precisely desired the representation in the theorem.
 
    A part with an edge is split as follows. Let $A$ be its set of active vertices. The subgraph $G[A]$ is bisected by Lemma~\ref{lem:two-graph-sparsification} into the vertex sets
    $A_1$ and $A_2$ of the left and the right child, and the isolated list of the parent is passed to the isolated list of the left child by a pointer. Every vertex of $A$ records its side in a global array $\mathit{side}$, with one entry per vertex of $G$, allocated at the beginning of the procedure. The internal range of each vertex of $A$ is then partitioned using a filter based on the side array and delimiters updated so that the second invariant is maintained. Finally, for each child, the newly isolated vertices are extracted from the current vertices by a filter and added to their respective isolated lists.
 
    An entry of an adjacency array is touched only while the endpoints of its edge share a part, and a vertex is never read after the split at which it becomes isolated. A part with $m'$ edges has at most $2m'$ active vertices, so the work of level $j$ is $O\big((5/6)^{j} m\big)$ apart from a constant for passing down pointers. The
    total work is thus
    \[
        O(n + m) + \sum_{j=0}^{t-1} O\big((5/6)^{j} m\big)
        + O\big(2^{t}\big) = O(n+m+k^4).
    \]
    Each level costs $O(\poly \log n)$ depth by
    Lemma~\ref{lem:two-graph-sparsification}, so the depth is
    $O(\log k \poly\log n)$.
\end{proof}

\subsection{Edge-balanced vertex partition}
In order to give an algorithm to compute the desired partition of Lemma \ref{lem:two-graph-sparsification}, we
will instantiate a bit pair benefit problem adapted to the balancing
guarantee we need. As a first step, we stable sort the vertices of $G$
by degree using Lemma~\ref{lem:degree-sort}, obtaining a list
$u_1, \dots, u_n$ with $\deg(u_i) \leq \deg(u_{i+1})$ in which the
vertices of degree zero occupy a prefix. If $n$ is not even, introduce
a dummy vertex of degree zero and place it at the front of the list. We
then consider the involution $\sigma \colon V \to V$ defined by mapping
$u_{2k-1} \mapsto u_{2k}$ and $u_{2k} \mapsto u_{2k-1}$ for each
$1 \leq k \leq n/2$. The provided involution satisfies
$\sum_{u \in V} |\deg(u) - \deg(\sigma u)| \leq 2\Delta$, where
$\Delta$ is the max degree of the graph $G$. To see this, write
\begin{align*}
    \sum_{u \in V} |\deg(u) - \deg(\sigma u)|
    &= 2\sum_{k=1}^{n/2} \deg (u_{2k}) - \deg(u_{2k-1}) \\
    &\leq 2 \sum_{i=1}^{n-1} \deg(u_{i+1}) - \deg(u_i)
    = 2( \deg(u_n)-\deg(u_1) ) \leq 2\Delta.
\end{align*}
This property implies that partitions of $V$ which respect the
involution satisfy a good balancing guarantee.

\begin{lemma}\label{lem:split-pair-balancing}
    If $V = V_1 \cup V_2$ is a partition so that $\sigma V_1 = V_2$,
    then
    \[
        |m_1-m_2| \leq \Delta/2,
    \]
    where $m_1$ and $m_2$ are the number of edges in the subgraphs
    induced by $V_1$ and $V_2$ respectively, and $\Delta$ is the
    maximum degree of the graph $G$.
\end{lemma}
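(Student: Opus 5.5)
We express $m_1$ and $m_2$ through degrees and the edges crossing the cut. Let $e_{12}$ be the number of edges with one endpoint in $V_1$ and one in $V_2$. Summing degrees over each side gives
\[
    \sum_{u\in V_1}\deg(u) = 2m_1 + e_{12}, \qquad \sum_{u\in V_2}\deg(u) = 2m_2 + e_{12}.
\]
Subtracting, the cut term cancels and
\[
    2(m_1-m_2) = \sum_{u\in V_1}\deg(u) - \sum_{u\in V_2}\deg(u).
\]
Since $\sigma V_1 = V_2$ and $\sigma$ is a bijection, we may reindex the second sum as $\sum_{u\in V_1}\deg(\sigma u)$. This gives
\[
    2(m_1-m_2) = \sum_{u\in V_1}\big(\deg(u)-\deg(\sigma u)\big).
\]

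Next we bound the right-hand side by the involution estimate proved just before the statement. The hypothesis $\sigma V_1=V_2$ says that every pair $\{u_{2k-1},u_{2k}\}$ is split across the two sides, so $V_1$ contains exactly one vertex from each pair. Hence
\[
    \sum_{u\in V_1}\big|\deg(u)-\deg(\sigma u)\big| = \tfrac12\sum_{u\in V}\big|\deg(u)-\deg(\sigma u)\big| \le \tfrac12\cdot 2\Delta = \Delta.
\]
Combining with the triangle inequality yields $2|m_1-m_2|\le\Delta$, that is, $|m_1-m_2|\le\Delta/2$.

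The only delicate point is the dummy vertex added when $n$ is odd. It has degree zero and no edges, so it changes neither $m_1$, $m_2$ nor any degree sum. The telescoping bound $\sum_{u\in V}|\deg(u)-\deg(\sigma u)|\le 2\Delta$ was already established with the dummy included, so the argument goes through unchanged. Deleting the dummy afterwards does not alter the edge counts.
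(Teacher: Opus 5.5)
Your proposal is correct and follows the paper's proof essentially step for step. Both use the degree-sum identities (so the cut term cancels), reindex the $V_2$ sum through $\sigma$, apply the triangle inequality, and halve the involution estimate $\sum_{u\in V}|\deg(u)-\deg(\sigma u)|\le 2\Delta$; your explicit justification that $V_1$ meets each pair exactly once, and your remark on the dummy vertex, are harmless details the paper leaves implicit.
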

\begin{proof}
    We may write the sum of degrees of vertices in $V_1$ and $V_2$
    respectively as
    \[
        \sum_{u \in V_1} \deg(u) = 2m_1 + |c(V_1,V_2)|
        \quad\text{and}\quad
        \sum_{u \in V_2} \deg(u) = 2m_2 + |c(V_1,V_2)|
    \]
    where $c(V_1,V_2)$ denotes the cut edges of the partition.
    Therefore
    \begin{align*}
        |m_1 - m_2|
        &= \frac{1}{2}\left|\sum_{u \in V_1} \deg(u)
            -\sum_{u \in V_2} \deg(u) \right|
        = \frac{1}{2}\left|\sum_{u \in V_1} \deg(u)
            -\deg(\sigma u) \right| \\
        & \leq \frac{1}{2}\sum_{u \in V_1} |\deg(u) -\deg(\sigma u)|
        = \frac{1}{4}\sum_{u \in V} |\deg(u) - \deg(\sigma u)|
        \leq \Delta/2. \qedhere
    \end{align*}
\end{proof}

\begin{proof}[Proof of Lemma~\ref{lem:two-graph-sparsification}]
    Suppose first that $\Delta > 2m/3$. Then the partition splitting
    the vertex $v$ of maximum degree from the remaining vertices
    already suffices: the first part induces no edges, and the second
    induces $m - \deg(v) < m - 2m/3 = m/3 \leq 5m/12$ edges. This case
    is detected and executed in $O(n+m)$ work and $O(\log n)$ depth, so
    we may assume $\Delta \leq 2m/3$.

    Under this assumption, it suffices to produce an involution
    splitting partition whose cut contains at least $m/2$ edges.
    Indeed, such a partition satisfies $m_1 + m_2 \leq m/2$, and
    combining with Lemma~\ref{lem:split-pair-balancing},
    \[
        \max(m_1, m_2) \leq \frac{m_1+m_2}{2} + \frac{|m_1-m_2|}{2}
        \leq \frac{m}{4} + \frac{\Delta}{4}
        \leq \frac{m}{4} + \frac{m}{6} = \frac{5m}{12}.
    \]

    To find such a partition, start with an involution splitting
    partition $V = S \cup \sigma S$, say
    $S = \{u_{2k} : 1 \leq k \leq n/2\}$. Given a partition
    $P = \{P_1, P_2\}$ of $S$, we obtain another involution splitting
    partition $V^P = \{V_1^P, V_2^P\}$ of $V$ where
    $V_1^P = P_1 \cup \sigma P_2$ and $V_2^P = \sigma P_1 \cup P_2$. An
    assignment $x = x_1\dots x_{n/2} \in \{0,1\}^{n/2}$ corresponds to
    the partition
    \mbox{$P(x) = \{\{s_i : x_i = 0\},\{s_j:x_j = 1\}\}$} of $S$, and
    thus induces the partition $V^x := V^{P(x)}$ of $V$. Writing
    $s_i := u_{2i}$, the number of cut edges of $V^x$ is
    \[
        B(x) := \big|c(V_1^{x},V_2^{x})\big|
        = \sum_{\{s_i,s_j\} \in G[S]} \ind[x_i \neq x_j]
        + \sum_{\{\sigma s_i,\sigma s_j\} \in G[\sigma S]}
            \ind[x_i \neq x_j]
        + \sum_{\{s_i,\sigma s_j\} \in c(S,\sigma S)} \ind[x_i = x_j],
    \]
    the last sum reflecting that $s_i$ and $\sigma s_j$ land in the
    same part exactly when $x_i \neq x_j$. Both indicators are
    functions of $x_i \oplus x_j$ alone, so $B$ is a benefit
    function of size $m$ over $n/2$ variables in the sense of
    Theorem~\ref{thm:monomial-linear}. The parallel constraints
    arising from distinct edges are permitted since $\mathcal P$ may be
    a multiset. Each nonconstant term has expectation $1/2$, and the constant terms contribute 1, so under a
    uniform assignment, we have $\E[B(x)] \geq m/2$, and solving the benefit
    problem by Theorem~\ref{thm:monomial-linear} yields $x$ with
    \[
        \big|c(V_1^x,V_2^x)\big| = B(x) \geq \E[B(x)] \geq m/2.
    \]
    This splitting of $V$ induced by $x$ therefore satisfies the desired guarantee of the lemma.

    As for the complexity, the degree sort costs $O(n+m)$ work and
    $O(\log n)$ depth by Lemma~\ref{lem:degree-sort}. The
    construction of the involution, the benefit instance, and the final
    partition are single scans of the vertex and edge lists, and the
    benefit problem costs $O(n+m)$ work and $O(\log m \log n)$ depth by
    Theorem~\ref{thm:monomial-linear}, giving the stated bounds.
\end{proof}

%% file: short-note/deterministic-graph-algos.tex
\section{Derandomizing maximal independent set and graph coloring}\label{sec:derandom}
We now give deterministic algorithms that compute a maximal independent set and a $({\deg}+1)$-coloring in linear work. Both follow the same scheme. Theorem~\ref{cor:aggregate-sparsification}
first partitions the graph into polylogarithmically many pieces inducing $O(m/L)$ edges in total, and the pieces are then processed from left to right. Before a piece is solved, the partial solution on
the preceding pieces are extended: for MIS, vertices with a neighbor in the current independent set are deleted, and for coloring, the palettes of each vertex are updated by removing colors already used by its neighbors. This leaves a self-contained instance on the piece, which one of Luby's deterministic algorithms \cite{Luby1993RemovingRandomness} solves with work optimal up to a polylogarithmic factor, after which the next piece is processed in the same way. Taking $L$ a sufficiently large power of $\log n$ makes the subroutine calls cost $O(m)$ in total, while the updates cost $O(n+m)$, as they charge each vertex only its degree. Theorem~\ref{thm:extendable-linear} states the scheme abstractly, and the two corollaries following it verify its hypotheses.

One point requires care. The partition bounds the number of edges inside the pieces but not the number of vertices. Thus, naively applying Luby's algortihm to each piece of the partition can cost $\Omega(n \poly\log n)$ work. Hypothesis~\ref{item:extend} therefore requires that an isolated vertex of an instance be resolvable in constant work. Deleting the isolated vertices leaves an instance with at most twice as many vertices as edges, so the cost of each call to the subroutine is governed by the number of edges of its piece, and the sparsity of the partition makes the total linear.

\begin{theorem}\label{thm:extendable-linear}
    Let $\Pi$ be a graph problem with the following two properties.
    \begin{enumerate}[label=(\Alph*)]
        \item\label{item:extend} There is a deterministic procedure
        which, given disjoint $U, W \subseteq V(G)$, a solution $s$ of
        $\Pi$ on $G[U]$, and access to the cut $(U, W)$ and to
        $G[W]$, outputs an instance of $\Pi$ on a
        subgraph of $G[W]$ of size $O\big(|W| + |E(G[W])|\big)$, such that any solution of the
        instance combines with $s$ to give a solution of $\Pi$ on
        $G[U \cup W]$. The procedure performs
        $O\big(\sum_{v \in W}(1 + \deg_G(v))\big)$ work in
        $O(\poly\log n)$ depth, and an isolated vertex of the instance
        can be resolved and deleted in $O(1)$ work.
        \item\label{item:subroutine} $\Pi$ admits a deterministic
        algorithm solving any instance with $n'$ vertices, $m'$ edges,
        and size $O(n'+m')$ in
        $O\big((n'+m')\log^{a} n\big)$ work and $O(\poly\log n)$ depth,
        for some constant $a$.
    \end{enumerate}
    Then $\Pi$ can be solved deterministically in $O(n+m)$ work and
    $O(\poly\log n)$ depth.
\end{theorem}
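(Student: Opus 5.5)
We apply Theorem~\ref{cor:aggregate-sparsification} with $k := L := \lceil \log n\rceil^{a}$. This produces an ordered $L$-sparse partition $V_1,\dots,V_P$ with $P = \poly(L) = \poly\log n$ pieces. The pieces jointly induce at most $m/L$ edges, and the construction costs $O(n+m+\poly\log n)=O(n+m)$ work and $O(\log L\,\poly\log n)=\poly\log n$ depth.

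We then process the pieces sequentially from left to right and maintain a solution $s_i$ of $\Pi$ on $G[U_i]$, where $U_i := V_1\cup\dots\cup V_i$. At step $i+1$ we invoke the procedure of \ref{item:extend} with $U=U_i$ and $W=V_{i+1}$. The ordered representation supplies exactly the required access: the left block of each adjacency array of a vertex in $W$ is the part of the cut $(U,W)$ incident to $W$, and the internal block is $G[W]$.

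The procedure outputs an instance on a subgraph of $G[V_{i+1}]$. Its isolated vertices include the stored list $I_{i+1}$ together with any vertices that lost all internal neighbours. We resolve and delete these in $O(1)$ work each, in parallel; for the newly isolated ones we use a filter over the instance's vertices, which costs $O(|V_{i+1}|)$ work and so fits the budget of \ref{item:extend}. What remains has $n'_{i+1}\le 2m'_{i+1}$, where $m'_{i+1}\le |E(G[V_{i+1}])|$. We solve it with the subroutine of \ref{item:subroutine} and combine the result with $s_i$, using the combination guarantee of \ref{item:extend}, to obtain $s_{i+1}$. After $P$ steps we have a solution on $G[V]$.

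\textbf{Work.} The extension steps cost $\sum_i O\big(\sum_{v\in V_i}(1+\deg v)\big)=O(n+m)$, since the pieces are disjoint. The isolated-vertex resolution costs $O(n)$ in total. The subroutine calls cost
\[
\sum_i O\big((n'_i+m'_i)\log^a n\big) \le \sum_i O(m'_i\log^a n) \le O\big((m/L)\log^a n\big)=O(m).
\]

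\textbf{Depth.} The computation runs for $P=\poly\log n$ sequential steps, and each step has $\poly\log n$ depth, so the total depth is $\poly\log n$.

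\textbf{Main obstacle.} The delicate point is that the vertex count of a piece is not controlled by the sparsity bound. A piece with few internal edges may contain many vertices, and feeding them to the subroutine would incur an $n\log^a n$ term. The $O(1)$ resolution of isolated vertices is exactly what removes this. I would need to check carefully that identifying isolated vertices in the produced instance costs no more than $O(|V_{i+1}|+\sum\deg)$, and that the instance size $O(|W|+|E(G[W])|)$ shrinks to $O(m'_{i+1})$ once they are deleted. A secondary concern is that every step must read only the portions of adjacency arrays already charged to $\sum_{v\in W}\deg v$, so that no global scan is repeated across the $P$ steps. Any per-step overhead independent of the piece, such as a constant or the maintenance of the global array, adds only $O(P)=o(n)$ work.
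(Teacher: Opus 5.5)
Your proposal is correct and follows the paper's proof essentially step for step: the same choice $k=L\approx\log^{a} n$, the same left-to-right extension using the ordered representation, and the same work and depth accounting. The obstacle you flag is milder than you fear, because you need not detect newly isolated vertices at all. Every vertex outside the stored list $I_{i+1}$ is an endpoint of an internal edge of $G[V_{i+1}]$, so deleting only $I_{i+1}$ already leaves at most $2|E(G[V_{i+1}])|$ vertices, which is exactly how the paper bounds the subroutine cost.
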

\begin{proof}
    Set $L := \log^{a} n$ and let $V_1, \dots, V_P$ be the partition given by Theorem~\ref{cor:aggregate-sparsification} for $k = L$, computed in $O(n+m)$ work and $O(\poly\log n)$ depth, so that $P = O(\log^{4a} n)$ and $\sum_i m_i \leq m/L$, where
    $m_i := |E(G[V_i])|$. Write $V_{<i} := V_1 \cup \dots \cup V_{i-1}$. The partition is ordered,
    so the adjacency array of each $v \in V_i$ lists its neighbors in $V_{<i}$ and then those in $V_i$, providing the access to the cut and to $G[V_i]$ that~\ref{item:extend} assumes for the pair $U = V_{<i}$, $W = V_i$.

    Process the parts in sequence, maintaining a solution $s_{i-1}$ of $\Pi$ on $G[V_{<i}]$. At step $i$, apply~\ref{item:extend} with
    $U = V_{<i}$ and $W = V_i$ to obtain an instance $T_i$ on a subgraph of $G[V_i]$. Each piece $V_i$ has its isolated vertex list, so those vertices of $T_i$ can be resolved and deleted immediately. This leaves the instance $T_i'$ given by the remaining vertices. We
    run the algorithm of~\ref{item:subroutine} on $T_i'$, and combine its output with $s_{i-1}$ to form $s_i$.
    By~\ref{item:extend} and induction on $i$, each $s_i$ solves $\Pi$ on $G[V_1 \cup \dots \cup V_i]$, so $s_P$ solves $\Pi$ on $G$.

    Every edge of $G$ is inspected at most twice, once from the part of each endpoint, so the extension steps cost
    \[
        \sum_{i=1}^{P} O\Big(\sum_{v \in V_i}
        \big(1 + \deg_G(v)\big)\Big) = O(n+m)
    \]
    work in total. Resolving and deleting isolated vertices costs $O(n)$ work in total. As $T_i'$ has at most $m_i$ edges and every vertex not originally in the isolated vertex list is incident to one of those $m_i$ edges, this instance has at most $2m_i$ vertices, so the call
    to~\ref{item:subroutine} costs $O(m_i \log^{a} n)$ work, and these sum
    to $O\big(\log^{a} n \cdot m/\log^{a} n\big) = O(m)$. Including the
    partition itself, the work is $O(n+m)$. The parts are processed
    sequentially and each costs $O(\poly\log n)$ depth, so the depth is
    $O(\poly\log n)$ as $P = O(\poly\log n)$.
\end{proof}
\begin{corollary}\label{cor:mis-linear}
    A maximal independent set of any graph can be computed deterministically in $O(n+m)$ work and $O(\poly\log n)$ depth.
\end{corollary}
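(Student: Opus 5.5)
We derive Corollary~\ref{cor:mis-linear} from Theorem~\ref{thm:extendable-linear}, taking $\Pi$ to be MIS. We verify the two hypotheses separately.

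\emph{Hypothesis~\ref{item:extend}.} We are given disjoint $U, W$, an MIS $s$ of $G[U]$, and the ordered adjacency arrays of the vertices of $W$.
\begin{enumerate}[label=(\roman*)]
\item Each $v \in W$ scans its neighbors in $U$, which form a contiguous prefix of its array when $U = V_{<i}$. It checks a global membership array $\mathit{inMIS}$ and marks itself \emph{dominated} if some neighbor is in $s$. This is an OR over the prefix, computable by a parallel sum in $O(\log n)$ depth, with no concurrent writes.
\item The undominated vertices are extracted by a filter, giving $W'$.
\item The output instance is $G[W']$. Its adjacency arrays are the internal ranges of $G[W]$ filtered by the dominated marks, at cost $O(\sum_{v\in W}(1+\deg_G v))$.
\end{enumerate}
To check correctness, let $M'$ be any MIS of $G[W']$ and consider $s \cup M'$.
\begin{itemize}
\item It is independent. $s$ and $M'$ are each independent, and no vertex of $W'$ has a neighbor in $s$.
\item It is maximal in $G[U\cup W]$. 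A vertex of $U \setminus s$ has a neighbor in $s$ by maximality of $s$. A vertex of $W \setminus W'$ has a neighbor in $s$ by definition. A vertex of $W' \setminus M'$ has a neighbor in $M'$.
\end{itemize}
An isolated vertex of $G[W']$ is resolved in $O(1)$ by adding it to the MIS. This is safe because it has no neighbor in $W'$, and its neighbors in $U$ are not in $s$.

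One subtlety is that isolation in the instance $G[W']$ can differ from isolation in $G[V_i]$. In Theorem~\ref{thm:extendable-linear} only the vertices isolated in $G[V_i]$ are pre-listed. This does not matter for the bound there: every remaining vertex still lies on an edge of $G[V_i]$, so $T_i'$ has at most $2m_i$ vertices. Vertices that became isolated only after deletion are simply handed to the subroutine, and they can be handled by it with no trouble.

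\emph{Hypothesis~\ref{item:subroutine}.} We invoke any known deterministic MIS algorithm with polylogarithmic overhead, for example Luby's algorithm derandomized by conditional probabilities~\cite{Luby1993RemovingRandomness}, or Han's algorithm~\cite{Han1996Derandomization} with $O((n'+m')\log n')$ work and polylogarithmic depth. Applied to an instance with $n'$ vertices and $m'$ edges given as adjacency arrays, it yields work $O((n'+m')\log^{a} n)$ for a constant $a$, and depth polylogarithmic in $n' \le n$.

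The main point needing care is the form of the subroutine's input. It expects a graph with vertex names in $[n']$. We relabel $W'$ by the prefix-sum ranks from the filter, and map the neighbor entries through this rank array in $O(n'+m')$ work.

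With both hypotheses checked, Theorem~\ref{thm:extendable-linear} gives $O(n+m)$ work and $O(\poly\log n)$ depth.
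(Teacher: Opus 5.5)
Your proposal is correct and follows essentially the same route as the paper: you verify \ref{item:extend} by a parallel OR over each vertex's left-neighbor prefix against a global indicator array, delete the dominated vertices, output the induced subgraph on the rest, and put isolated vertices into the set; you verify \ref{item:subroutine} by citing a polylog-overhead deterministic MIS algorithm, and then apply Theorem~\ref{thm:extendable-linear}. Your extra remarks, on relabeling the instance to $[n']$ and on vertices that become isolated only after the deletion step, add implementation detail the paper leaves implicit, and they agree with the vertex-count accounting in the theorem's proof.
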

\begin{proof}
    The independent set is maintained globally as an indicator array. We verify~\ref{item:extend}. Given a maximal independent set $M$ of $G[U]$, each $v \in W$ forms an auxiliary array over its left neighbors, filled by the respective neighbors indicator values. A parallel reduction over the auxiliary array decides whether $v$ has a neighbor in $M$. We then delete all such vertices, and output $H := G[W \setminus D]$. All of this performs $O(1 + \deg_G(v))$ work at each $v$ in $O(\log n)$ depth, and the size of $H$ is $O\big(|W| + |E(G[W])|\big)$. If $M'$ is a maximal independent set of $H$, then $M \cup M'$, formed by setting the bits of $M'$ in the indicator, is a maximal independent set of $G[U \cup W]$; any new incidence would have to come from a cut edge, and all such vertices were removed from consideration in the update step. An isolated vertex of $H$ is resolved by placing it in the independent set, which is sound because it has a neighbor neither in $M$ nor in $H$, and two isolated vertices are never adjacent. Property~\ref{item:subroutine} holds by any of the deterministic MIS algorithms of \cite{Luby1993RemovingRandomness,GoldbergSpencer1989Constructing,Han1996Derandomization}, so Theorem~\ref{thm:extendable-linear} applies.
\end{proof}

\begin{corollary}\label{cor:coloring-linear}
    A $({\deg}+1)$-coloring of any graph can be computed deterministically in $O(n+m)$ work and $O(\poly\log n)$ depth. In particular, so can a $(\Delta+1)$-coloring.
\end{corollary}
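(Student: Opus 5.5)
We apply Theorem~\ref{thm:extendable-linear} to the $(\deg+1)$-list-coloring problem. An instance is a graph $H$ in which every vertex $v$ carries a palette $\Lambda_v$ with $|\Lambda_v| \geq \deg_H(v)+1$. The palette is stored as a sorted array of colors, and its size is $O(1+\deg_H(v))$. The input $(\deg+1)$-coloring instance is the special case $\Lambda_v = \{0,\dots,\deg_G(v)\}$. Every such palette is a subset of $\{0,\dots,\Delta\}$, so it is in particular a valid $(\Delta+1)$-coloring, which gives the second sentence of the corollary.

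\textbf{Hypothesis~\ref{item:extend}.} Let $c$ be a proper list coloring of $G[U]$, stored in a global color array. For each $v \in W$ we allocate a bit array $\mathit{used}_v$ indexed by the current palette positions of $v$. Each left neighbor $u$ of $v$ with $c(u) \in \Lambda_v$ marks the corresponding entry. Concurrent writes of the value $1$ are allowed on the \textsc{common} CRCW PRAM, which is the only place we need them. To find the position of $c(u)$ we need a constant-time lookup. I would keep the palettes of the form ``$\{0,\dots,d\}$ minus removed colors'' throughout, so a color $\gamma$ lies in the palette only if $\gamma \leq \deg_G(v)$. Then we can index a bit array of length $\deg_G(v)+1$ directly by the color and ignore colors larger than $\deg_G(v)$. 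A filter over this array then produces the new sorted palette $\Lambda'_v$. This is $O(1+\deg_G(v))$ work and $O(\log n)$ depth per vertex.

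At most $\deg_U(v)$ colors are removed, so $|\Lambda'_v| \geq \deg_G(v)+1-\deg_U(v) \geq \deg_{G[W]}(v)+1$. The new instance is $G[W]$ with palettes $\Lambda'_v$, and its size is $O(|W|+|E(G[W])|)$. This needs one care point: the palette can have length up to $\deg_G(v)+1$, which may exceed $O(1+\deg_{G[W]}(v))$. To fix this, truncate $\Lambda'_v$ to its first $\deg_{G[W]}(v)+1$ entries, which only shrinks palettes while keeping them valid. Any proper list coloring of the instance is compatible with $c$ across cut edges, since the colors of left neighbors were removed. An isolated vertex is resolved in $O(1)$ by taking the first color of its nonempty palette.

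\textbf{Hypothesis~\ref{item:subroutine}.} As the introduction states, Luby's deterministic $(\Delta+1)$-coloring algorithm \cite{Luby1993RemovingRandomness} runs verbatim on list instances with sorted palettes of size $\deg+1$. It costs $O((n'+m')\poly\log n)$ work and $\poly\log n$ depth, so some constant $a$ works. This is the step I expect to be the main obstacle. One has to check that each round of Luby's algorithm only uses that each vertex picks a tentative color from its current palette, and that the conditional-expectation argument bounds progress via $|\Lambda_v| > \deg(v)$ rather than via the global bound $\Delta+1$. Sortedness is needed so that palette updates and tentative-color selection by index run in $O(|\Lambda_v|)$ work. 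Once both hypotheses are in place, Theorem~\ref{thm:extendable-linear} gives $O(n+m)$ work and $O(\poly\log n)$ depth.
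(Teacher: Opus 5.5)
Your proposal is correct and follows the paper's proof essentially step for step. The matching steps are: a color-indexed Boolean array of length $\deg_G(v)+1$, valid because unprocessed palettes contain no color above $\deg_G(v)$, marked by left neighbors through common concurrent writes and followed by a stable filter; the bound $|\Lambda'_v| \geq \deg_{G[W]}(v)+1$; truncation for the size bound; first-color resolution of isolated vertices; and Luby's algorithm on sorted $(\deg+1)$-palettes for hypothesis~\ref{item:subroutine}. The step you flag as the main obstacle, that Luby's algorithm runs unchanged on such list instances, is only asserted in the paper as well, so your argument is exactly as complete as the paper's.
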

\begin{proof}
    We verify~\ref{item:extend}. Given a proper coloring $\varphi$ of $G[U]$ from its palettes, each $v \in W$ computes its residual palette $C'(v) := C(v) \setminus \{\varphi(u) : u \in N_G(v) \cap U\}$ as follows: in a Boolean array indexed by $\{1, \dots, \deg_G(v)+1\}$, each adjacency entry of $v$ holding a neighbor $u \in U$ with $\varphi(u) \leq \deg_G(v)+1$ marks the cell $\varphi(u)$, and $v$ then stably filters $C(v)$, keeping the unmarked colors. Colors above $\deg_G(v)+1$ never lie in $C(v)$, so $C'(v)$ is as claimed, and stability ensures the palette remains sorted. This costs $O(\deg_G(v)+1)$ work at each $v$ and $O(\log n)$ depth. At most $\deg_U(v)$ colors are removed, so
    \[
        |C'(v)| \geq \deg_G(v) + 1 - \deg_U(v) \geq \deg_{G[W]}(v) + 1.
    \]
    Each palette $C'(v)$ is then truncated to size $\deg(v)+1$. The resulting graph $G[W]$ with the palettes $C'$ is again an instance of $(\deg + 1)$-coloring. No vertex of $W$ retains a color used by a neighbor in $U$, so any proper coloring from the palettes $C'$ combines with $\varphi$ into one of $G[U \cup W]$, and an isolated vertex is resolved by the first color of its palette, which is nonempty. For~\ref{item:subroutine}, the subroutine receives an instance of the above form with $n'$ vertices and $m'$ edges, after which we apply the deterministic coloring algorithm from Luby's paper \cite{Luby1993RemovingRandomness}. We remark here that Luby's coloring algorithm is presented as only using $\Delta+1$ colors, but his exact algorithm works given that each vertex $v$ has a sorted palette $C(v)$ of size $\deg(v)+1$, so we may apply it to our instance.
\end{proof}